\documentclass[english,english,pra,english,preprint,amsmath,amssymb,aps,longbibliography,showkeys, titlepage]{revtex4-2}
\usepackage{lmodern}
\usepackage{pgf}
\usepackage{mathtools}
\usepackage[T1]{fontenc}
\usepackage[latin9]{inputenc}
\setcounter{secnumdepth}{3}
\usepackage{amsmath}
\usepackage{amsthm}
\usepackage{amssymb}
\usepackage{graphicx}
\usepackage{dirtytalk}
\usepackage[section]{placeins}
\newtheorem*{theorem*}{Theorem}

\makeatletter
\usepackage[T1]{fontenc}
\setcounter{secnumdepth}{3}
\usepackage{amsmath}
\usepackage{amsthm}
\usepackage{amssymb}
\usepackage{graphicx}

\newtheorem{theorem}{Theorem}


\usepackage{graphicx}
\usepackage{float}
\usepackage{xcolor}
\usepackage[hypertexnames=false]{hyperref}
\hypersetup{
	breaklinks = true,
    colorlinks = true,
    citecolor = {blue},
	urlcolor = {blue},
	linkcolor = {blue}
}
\usepackage{babel}
\makeatother
\begin{document}
\date{\today}
\title{On Minimizing Phase Space Energies}
\author{Michael Updike}
\email{michaelupdike@princeton.edu}

\affiliation{Princeton Plasma Physics Laboratory, Princeton University, Princeton,
NJ 08540}
\affiliation{Department of Astrophysical Sciences, Princeton University, Princeton,
NJ 08540}
\author{Nicholas Bohlsen}
\email{nbohlsen@pppl.gov}

\affiliation{Princeton Plasma Physics Laboratory, Princeton University, Princeton,
NJ 08540}
\affiliation{Department of Astrophysical Sciences, Princeton University, Princeton,
NJ 08540}
\author{Hong Qin}
\email{hongqin@princeton.edu }

\affiliation{Princeton Plasma Physics Laboratory, Princeton University, Princeton,
NJ 08540}
\affiliation{Department of Astrophysical Sciences, Princeton University, Princeton,
NJ 08540}
\author{Nathaniel J. Fisch}
\email{fisch@princeton.edu }

\affiliation{Princeton Plasma Physics Laboratory, Princeton University, Princeton,
NJ 08540}
\affiliation{Department of Astrophysical Sciences, Princeton University, Princeton,
NJ 08540}
\begin{abstract}
A primary technical challenge for harnessing fusion energy is to control and extract energy from a non-thermal distribution of charged particles. The fact that phase space  evolves by symplectomorphisms fundamentally limits how a distribution may be manipulated. While the constraint of phase-space volume preservation is well understood, other constraints remain to be fully appreciated. To better understand these constraints, we study the problem of extracting energy from a distribution of particles using area-preserving and symplectic linear maps. When a quadratic potential is imposed, we find that the maximal extractable energy can be computed as trace minimization problems. We solve these problems and show that the extractable energy under linear symplectomorphisms may be much smaller than the extractable energy under special linear maps. The method introduced in the present study enables an energy-based proof of the linear Gromov non-squeezing theorem. 
\end{abstract}
\maketitle

\section{Introduction}
As the deployment of commercial fusion energy accelerates, it becomes increasingly indispensable to develop phase space engineering techniques to energize fusing particles and extract energy from fusion product particles.  For standard magnetic confinement D-T fusion, self-sustained burning requires that the energy of alpha particles be transferred to fusing ions, which remains a technical challenge \citep{Budny2018}. Phase space engineering techniques have been designed to channel the alpha particle energy to fusing ions directly via electromagnetic waves \citep{Fisch1992, Fisch1995a, Fisch1995, Herrmann1997,Ochs2015}. For advanced fuel fusion using p-B11 or D-He3, the fusion energy released is carried by charged particles \citep{Bussard1994,Rostoker1997,Nevins1998,Volosov2006,Ochs2021,Kolmes2022a,Ochs2022,Munirov2023,Mlodik2023,Ochs2024}. It is thus possible to convert the fusion energy directly into electricity by manipulating the charged particles of fusion products using electromagnetic fields. On the other hand, advanced aneutronic fusion operates in non-thermalized conditions, necessitating substantial power circulation to maintain fusing particles in non-equilibrium energy states \citep{Nevins1998,Nevins1998a,Qin2024,Qin2025}. For this purpose, energy needs to be extracted from the thermalized fusing particles and converted into kinetic energy.

Electromagnetic manipulation of charged particles encounters inherent physical constraints. Namely, the particle phase space must evolve by Hamiltonian symplectomorphisms. Liouville's theorem establishes that while phase space volume occupied by particles can be reconfigured, it cannot be reduced. This and other constraints led Qin et al. \citep{Qin2025, Qin2024} to frame a pivotal inquiry: How much energy can we electromagnetically harvest from fusion byproducts (like alpha particles from  p-B11, D-He3, or D-T  reactions)? Conversely, what is the lowest attainable energy configuration through electromagnetic processes? Phase-space conservation ensures this baseline state remains above zero, effectively capping the potential energy extraction achievable via radiofrequency waves in plasma systems. 

To give rigorous estimates on these questions, we study how much energy can be extracted from a particle distribution undergoing Hamiltonian time evolution. To be general, let $\mathcal{P}$ be a particle phase space, mathematically a symplectic manifold. A collection of many particles can be described by a distribution function $f_{t}:\mathcal{P} \to [0,\infty)$, which gives the number of particles in a region $\mathcal{V}$ at time $t$ as $\int_{\mathcal{V}} f_{t}$. 
Given a time-dependent Hamiltonian $\mathcal{H}(t): \mathcal{P} \to \mathbb{R}$ generating a complete flow $\phi_{t}: \mathcal{P} \to \mathcal{P}$, the distribution function evolves in the absence of collisions as $f_{t} = f_{0} \circ \phi_{t}^{-1}$. The Hamiltonian $\mathcal{H}(t)$ can be taken to include the self-generated fields of the particles, as well as any externally applied fields, in which case the evolution relation $f_{t} = f_{0} \circ \phi_{t}^{-1}$ is equivalent to the Vlasov equation. Given a reference energy function $\mathcal{E}: \mathcal{P} \to \mathbb{R}$, representing the energy of particles in the absence of fluctuating fields, we define the energy of a particle distribution as $E[f_t] = \int_{\mathcal{P}} f_{t}$. Our question then translates to finding $\inf_{\phi \in \text{Ham}(\mathcal{P})} E[f_{0} \circ \phi^{-1}]$, the minimal energy a particle distribution must maintain under Hamiltonian time evolution. 

When the allowed transforms $\{\phi\}$ are relaxed to be merely invertible and area-preserving, we recover the problem posed by Gardner in \citep{Gardner}. Under these relaxed assumptions, and with enough decay assumptions on $f_{0}$, the Gardner energy $E_{G} \coloneqq \inf_{\{\phi \}} E[f_{0} \circ \phi^{-1}]$ can be computed by sequentially permuting equal-measure sets in phase space. This procedure has come to be known as Gardner's restacking algorithm \citep{Kolmes2020,Kolmes2024,Kolmes2022,Dodin2005}.  While each permutation in Gardner restacking is noncontinuous, smooth approximations may be found using the theory of Dacogna and Moser \citep{dac}. Thus $E_{G}$ defines the minimal energy even when $\{\phi\}$ are restricted to be area-preserving diffeomorphisms.

When $\{\phi \}$ are further restricted to be symplectomorphisms,  one must ponder whether symplectic maps behave rigidly or flexibly with respect to the problem at hand. Unlike area-preserving maps, symplectic maps can quite restrictive. Gromov's nonsqueezing theorem \citep{Gromov, Hofer} states there is no symplectic embedding of the ball $B_{2n}(r) = \{ \mathbf{z} \in \mathbb{R}^{2n}: |\mathbf{z}|^2 < r^2 \}$ into the symplectic cylinder $\mathcal{Z}(R) = \{(\mathbf{x}, \mathbf{p}) \in \mathbb{R}^{2n}: x_{1}^2 + p_{1}^2 < R^2 \}$ except when $r\leq R$. Yet symplectic maps can also be quite flexible owing to Darboux's theorem, which implies there are no local invariants of symplectic manifolds \citep{Darboux1882, Mcduff}. This flexibility versus rigidity conundrum can be answered using a result of Katok \citep[p. 545]{Katok}. Namely, for any two equal-measure, compact subsets $A, B$ of a symplectic manifold, and connected open set $A \cup B \subset U$, there is a Hamiltonian symplectomorphism $\psi$ supported in $U$ making the symmetric difference between $\psi^{-1}(B)$ and $A$ arbitrarily small. This result allows one to approximate Gardner restacking with Hamiltonian symplectomorphisms, thereby showing $E_{G} = \inf_{\phi \in \text{Ham}(\mathcal{P})}E[\phi]$. The details of this are given in \ref{Gromov=Gardner}.

While we have answered our original question, we have done so unsatisfactorily. The symplectic transformations that take an initial distribution $f_{0}$ close to its minimal energy must generically include large gradients, and thus be infeasible to implement physically. Consider, for example, the problem of embedding all but an $\epsilon$ amount of $B_{2n}(r)$ into the cylinder $\mathcal{Z}_{2n}(R)$. It was shown by Sackel, et. al. \cite[p.1116]{qualifications} that for any fixed $r>R$, there is a positive constant $C$ such that the Lipchitz constant $L(\psi)$ of any symplectic embedding $\psi: B^{2n}(r) \to \mathbb{R}^{4}$ must satisfy $L(\psi)^2 \geq C \epsilon^{-1}$. 

To remedy this large gradients problem, we must look for the infimum of $E[f_{0} \circ \phi^{-1}]$ over a more suitable family of symplectomorphisms. The simplest case, and the one we study in this paper, is that $f_{0}$ is supported on the smallest scale we can manipulate. Such a situation arises in accelerator and plasma physics when one considers beams of particles \citep{Davidson01-all, Qin2011,Qin2010}. In such a case, we may safely neglect the topology of $\mathcal{P}$ and assume $\mathcal{P} = \mathbb{R}^{2n}$ with its standard symplectic structure. We may also approximate any allowed symplectomorphism by its linearization $\phi(\mathbf{z}) \approx A (\mathbf{z} - \mathbf{b})$ where $A \in Sp(2n)$ is a symplectic matrix and $\mathbf{b} \in \mathbb{R}^{2n}$. This amounts to ignoring cubic and higher-order terms in the Hamiltonian (see \ref{Ham}).

In this approximation, our refined question becomes to find 
\begin{equation}\label{Esp}
E_{Sp(2n)} \coloneqq \inf_{\mathbf{b} \in \mathbb{R}^{2n}, \; A \in Sp(2n)} \int \mathcal{E}(\mathbf{z}) f_{0}(A^{-1} \mathbf{z} + \mathbf{b}) d^{2n}\mathbf{z},  
\end{equation}
a quantity we call the linear Gromov energy. Since $f_{0}$ is assumed to be supported on the smallest scale we can manipulate, it is reasonable to make the further assumption that $\mathcal{E}(\mathbf{z})$ is well-approximated by a quadratic polynomial in Eq.\,(\ref{Esp}). To explore the constraints of symplectomorphisms, we will also compute 
\begin{equation}E_{SL(2n)} \coloneqq  \inf_{\mathbf{b} \in \mathbb{R}^{2n}, \; A \in SL(2n)} \int \mathcal{E}(\mathbf{z}) f_{0}(A^{-1} \mathbf{z} + \mathbf{b}) d^{2n}\mathbf{z},\end{equation}
which we refer to as the linear Gardner energy. Since every symplectic matrix is area-preserving, we will have that $E_{SL(2n)} \leq E_{Sp(2n)}$, but we should not generally expect equality when $n > 1$. Indeed, we will show this to be the case below.

\section{Analysis} \label{main}
We now answer the question set forth. We will assume that $\mathcal{E}$ is a quadratic polynomial of the form $\mathcal{E}(\mathbf{z}) = a + \mathbf{b} \cdot \mathbf{z} + \mathbf{z}^{T} V \mathbf{z}$ where $V$ is a symmetric matrix. To ensure $\mathcal{E}(\mathbf{z})$  has a unique minimum, we assume that $V$ is positive definite. We will show in \ref{appendix} how positive semi-definite $V$ can be treated. Since $V$ is invertible, there is a constant vector $\mathbf{d}$ and a scalar $V_{0}$ such that $\mathcal{E}(\mathbf{z}) = V_{0} + (\mathbf{z} - \mathbf{d})^{T} V (\mathbf{z} - \mathbf{d})$. For any nonnegative measurable function $f_{0}$ such that $\int |\mathbf{z}|^2 f_{0} < \infty$ we define the moments 
\begin{align}
    N & \coloneqq \int f_{0}(\mathbf{z}) d^{2n}\mathbf{z}, \\ 
\mathbf{c} & \coloneqq \frac{1}{N} \int \mathbf{z} f_{0}(\mathbf{z}) d^{2n} \mathbf{z}, \nonumber \\
    H & \coloneqq \int (\mathbf{z}\otimes \mathbf{z}) f_{0}(\mathbf{z}+\mathbf{c}) d^{2n} \mathbf{z}. \nonumber
\end{align}

These moments can be computed either directly or by differentiating the Fourier transform of $f_{0}$. It is important to note that $H$ is a positive definite matrix since for any $\xi \neq 0 \in \mathbb{R}^{2n}$, $\xi^{T} H \xi = \int (\xi \cdot \mathbf{z})^2 f_{0}(\mathbf{z} + \mathbf{c})d^{2n}\mathbf{z} > 0$. For any fixed $A \in SL(2n)$, we have that 
\begin{align} \label{affine}
    \inf_{\mathbf{b} \in \mathbb{R}^{2n}} \int \mathcal{E}(\mathbf{z}) f_{0}(A^{-1}\mathbf{z} + \mathbf{b}) &= \inf_{\mathbf{b} \in \mathbb{R}^{2n}} \int \mathcal{E}(\mathbf{z})\,  f_{0}(A^{-1}(\mathbf{z} - \mathbf{d} -\mathbf{b})  + \mathbf{c}) d^{2n}\mathbf{z} \\
    & = V_{0} N + \inf_{\mathbf{b} \in \mathbb{R}^{2n}} \int (\mathbf{z} + \mathbf{b})^{T} V(\mathbf{z} + \mathbf{b}) f_{0}(A^{-1}\mathbf{z} + \mathbf{c})d^{2n}\mathbf{z} \nonumber \\
    & =V_{0} N + \int \mathbf{z}^{T} V \mathbf{z} \, f_{0}(A^{-1}\mathbf{z} + \mathbf{c})d^{2n}\mathbf{z} + N \inf_{\mathbf{b} \in \mathbb{R}^{2n}} \mathbf{b}^{T} V \mathbf{b} \nonumber \\
    & = V_{0} N + \int \mathbf{z}^{T} V \mathbf{z} \,f_{0}(A^{-1}\mathbf{z} + \mathbf{c}) d^{2n}\mathbf{z},  \nonumber
\end{align}
where we have used that $\int \mathbf{z} f_{0}(\mathbf{A}^{-1}\mathbf{z} + \mathbf{c})d^{2n}\mathbf{z}= \mathbf{A}\int(\mathbf{z}'-\mathbf{c})f_{0}(\mathbf{z}')d^{2n}\mathbf{z}'=0$ and  $\mathbf{b}^{T} V \mathbf{b} \geq 0$. 
Eq.\,(\ref{affine}) can be paraphrased as saying that the optimal $\mathbf{b}$ is such that the center of mass of $f_{0}$ lies at the potential minimum. We now must find the optimal $A$. We start with the simplification
\begin{equation}
    \int \mathbf{z}^{T} V \mathbf{z}\, f_{0}(A^{-1} \mathbf{z} + \mathbf{c}) d^{2n}\mathbf{z} = \text{tr}(V A H A^{T}) . 
\end{equation}
Thus to compute either $E_{SL(2n)}$ or $E_{Sp(2n)}$ we must solve a trace minimization problem. 
\subsection{Linear Gardner Energy} 
We first analyze the case that $A \in SL(2n)$ since the linear algebra is more familiar. Given any symmetric matrix $M$ there exists a special orthogonal,
hence $SL(2n)$, matrix $O_{M}$ such that $O_{M}^{T} MO_{M} = D_{M}$, where $D_{M}$ is diagonal. We may write any $A \in SL(2n)$ as $A = O_{V} B O_{H}^{T}$ with $B \in SL(2n)$. Our trace to be minimized then becomes
\begin{equation}
\text{tr}(VA H A^{T}) = \text{tr}(D_{V} B D_{H} B^{T}) = \text{tr}((D_{H}^{1/2} B^{T} D_{V}^{1/2}) (D_{H}^{1/2} B^{T} D_{V}^{1/2})^{T}),
\end{equation}
where we have used that $D_{H}$ and $D_{V}$ are positive definite in taking their roots. Since $X\coloneqq (D_{H}^{1/2} B^{T} D_{V}^{1/2}) (D_{H}^{1/2} B^{T} D_{V}^{1/2})^{T}$ is positive definite, $X$ has positive eigenvalues so we may apply the AM-GM inequality to the eigenvalues of $X$ to derive that
\begin{equation}\label{eqn9}
\text{tr}(X) \geq 2n\,\text{det}(X)^{1/2n} = 2n \, \text{det}(VH)^{1/2n},
\end{equation}
where we used $\text{det}(B) = 1$. Equality in Eq.\,(\ref{eqn9}) is obtainable iff \\ $B = \text{det}(HV)^{1/4n} D_{V}^{-1/2}O D_{H}^{-1/2}$ for some $O \in SO(2n)$. This allows us to conclude that
\begin{equation} \label{gardner}
E_{SL(2n)} = NV_{0} + 2n \, \text{det}(VH)^{1/2n}. 
\end{equation}
\subsection{Linear Gromov Energy}
We now restrict to the trickier case that $A \in Sp(2n)$. Williamson's theorem \citep{williamson} states that for any symmetric, positive-definite matrix $M$ there exists a symplectic matrix $S_{M}$ such that 
\begin{equation}
    S_M^{T} M S_M = \begin{bmatrix} \mathcal{D}_{M} & 0 \\ 0 &  \mathcal{D}_{M} \end{bmatrix} = \mathcal{D}_{M} \oplus \mathcal{D}_{M},
\end{equation}
where $\mathcal{D}_{M} = \text{diag}(\lambda^M_{1},\hdots,\lambda^M_{n})$ with $\lambda^M_{1} \geq \hdots \geq \lambda^M_{n} > 0$. The values $\lambda^M_{i}$ are called the symplectic eigenvalues of $M$. For computational purposes, we note that $\lambda^M_{i}$ is a symplectic eigenvalue of $M$ iff $\pm i \lambda^M_{i}$ are eigenvalues of $J M$ where $J = \begin{bmatrix} 0 & \mathbb{I}_{n} \\ -\mathbb{I}_{n} & 0\end{bmatrix}$ is the standard symplectic form \citep{JM}. Writing $A = S_{V} BS_{H}^{T}$ with $B \in Sp(2n)$ we find that 
\begin{equation}\label{eqn10}
\text{tr}(VAHA^{T}) = \text{tr}(B^{T}(\mathcal{D}_{V} \oplus \mathcal{D}_{V}) B(\mathcal{D}_{H} \oplus \mathcal{D}_{H})). 
\end{equation}

If $B$ is chosen to be the symplectic linear map relabeling the canonically conjugate pairs by the formula $x_{i} \mapsto x_{n+1-i}$ and $p_{i} \mapsto p_{n+1-i}$, we obtain the inequality
\begin{equation}\label{optimal}
    \inf_{A \in Sp(2n)} \text{tr}(VAHA^{T}) \leq 2 \sum_{i=1}^{n} \lambda^{H}_{i} \lambda^{V}_{n+1-i}. 
\end{equation}
It is a nontrivial fact, which we hold off on proving until Theorem \ref{main}, that the inequality in Eq.\,(\ref{optimal}) can be replaced by equality. Hence 
\begin{equation}\label{gromov}
    E_{Sp(2n)} = N V_{0} + 2\sum_{i=1}^{n} \lambda^{H}_{i} \lambda^{V}_{n+1-i}. 
\end{equation}
Just as with the linear Gardner energy, there is a continuous family of $Sp(2n)$ matrices minimizing the linear Gromov energy. Notably, for any $n$ tuple of angles $(\theta_{1},\hdots,\theta_{n})$ we can define the symplectic rotation matrix
\begin{equation} X(\theta_{i}): \begin{bmatrix}x_{i} \\ p_{i} \end{bmatrix} \mapsto \begin{bmatrix} \cos(\theta_{i}) & \sin(\theta_{i}) \\ -\sin(\theta_{i}) & \cos(\theta_i) \end{bmatrix}\begin{bmatrix}x_{i} \\ p_{i} \end{bmatrix}. \end{equation}
Replacing $B$ with $X(\theta_{i}) B$ in Eq.\,(\ref{eqn10}) leaves the trace invariant so there is a $SO(2)^{n}$ family of linear maps taking $f_{0}$ to a minimal energy configuration. This family of matrices is enlarged when the symplectic eigenvalues of either $V$ or $H$ are degenerate.

It is interesting to compare Eq.\,(\ref{gardner}) and Equation Eq.\,(\ref{gromov}). Notably, the AM-GM inequality shows that $E_{Sp(2n)} \geq E_{SL(2n)}$ since
\begin{equation}\label{comparison}
 2\sum_{i=1}^{n} \lambda^{H}_{i} \lambda^{V}_{n+1-i} \geq 2n \, (\det(HV))^{1/2n}, 
 \end{equation} 
 with equality iff $\lambda_{i}^{H} \lambda_{n+1-i}^{V} \equiv \text{const}$. In particular,  $E_{Sp(2)} = E_{SL(2)}$ as expected. 

\subsection{Example 1}\label{ex1}
 We first compute an easy example in $n = 2$. For $\epsilon > 0$, suppose that $\mathcal{E}(\mathbf{z},\epsilon) = x^2 + \epsilon^2 y^2 + p_{x}^2 + p_{y}^2$. Suppose further that $f_{0}(\mathbf{z},R) = \frac{6}{R^2 |B(R)|} \chi_{B(R)} = \frac{6}{R^2|B(R)|} \Theta(R^2-|\mathbf{z}|^2)$ is a rescaled indicator function on the ball. 
 We compute that $N=1$, $\mathbf{c} = 0$, $H(R) =  \mathbb{I}_4$, $V_{0} = 0$, $\mathbf{d} = 0$, and $V(\epsilon) = \text{diag}(1,\epsilon^2,1,1)$. The initial energy stored in $f_{0}$ is $E[f_{0}] = (3 + \epsilon) $. For $\epsilon$ small, we should expect that $E_{SL(4)}$ is small since $f_{0}$ can be squeezed onto the $y$ axis via area-preserving maps. Indeed, Eq.\,(\ref{gardner}) gives us that 
 \begin{equation}\label{example}
     E_{SL(4)} = \epsilon^{1/2}.
 \end{equation}
 Dividing this equation by $E[f_{0}]$, we can alternatively compute the inaccessible energy fraction $F_{SL(2n)} \coloneqq \frac{E_{SL(2n)}}{E[f_{0}]}$ to be
  \begin{equation}\label{example}
     F_{SL(4)} = \frac{4 \epsilon^{1/2}}{3 + \epsilon^2}.
 \end{equation}
The distribution function after an energy minimizing linear map is $f_{0}\circ \phi^{-1} = \frac{1}{|B(R)|} \Theta(R^2-\epsilon^{-1/2}(x^2+p_{x}^2+p_{y}^2)-\epsilon^{3/2}y^2)$ which looks as expected. 

In contrast, the linear Gromov's nonsqueezing theorem prohibits squeezing $f_{0}$ onto the $y$-axis via linear symplectomorphisms. This implies we should find $E_{Sp(2n)}$ to be finite in the limit $\epsilon \to 0^{+}$. $V(\epsilon)$ can be symplectically diagonalized by $S_{V} = \text{diag}( 1, \epsilon^{-1/2}, 1, \epsilon^{1/2})$ giving $\mathcal{D}_{V} = \text{diag}(1,\epsilon)$. From Eq.\,(\ref{gromov}) we can then compute
\begin{equation}\label{equation 2}
    E_{Sp(4)} = \frac{1}{2} \left(1+ \epsilon \right), 
\end{equation}
which limits to a finite value as expected. We compute the inaccessible energy fraction $F_{Sp(2n)} \coloneqq \frac{E_{Sp(2n)}}{E[f_{0}]}$ as
\begin{equation}
F_{Sp(4)} = \frac{2 + 2 \epsilon}{3 + \epsilon^2}.
\end{equation}
The distribution function after the energy-minimizing mapping is $f_{0} \circ \phi^{-1} =  \frac{6}{R^2|B(R)|} \Theta(R^2-\epsilon y^2 - \epsilon^{-1} p_{y}^2 - x^2 - p_{x}^2)$ which looks as expected since one cannot symplectically compress $f_{0}$ along the $x$ axis without increasing $p_{x}$ by a corresponding factor. This addresses a question put forth in \cite[p.4]{Qin2025}.
\begin{center}
    \begin{figure}[h!]
        \centering
        \includegraphics[width=\linewidth]{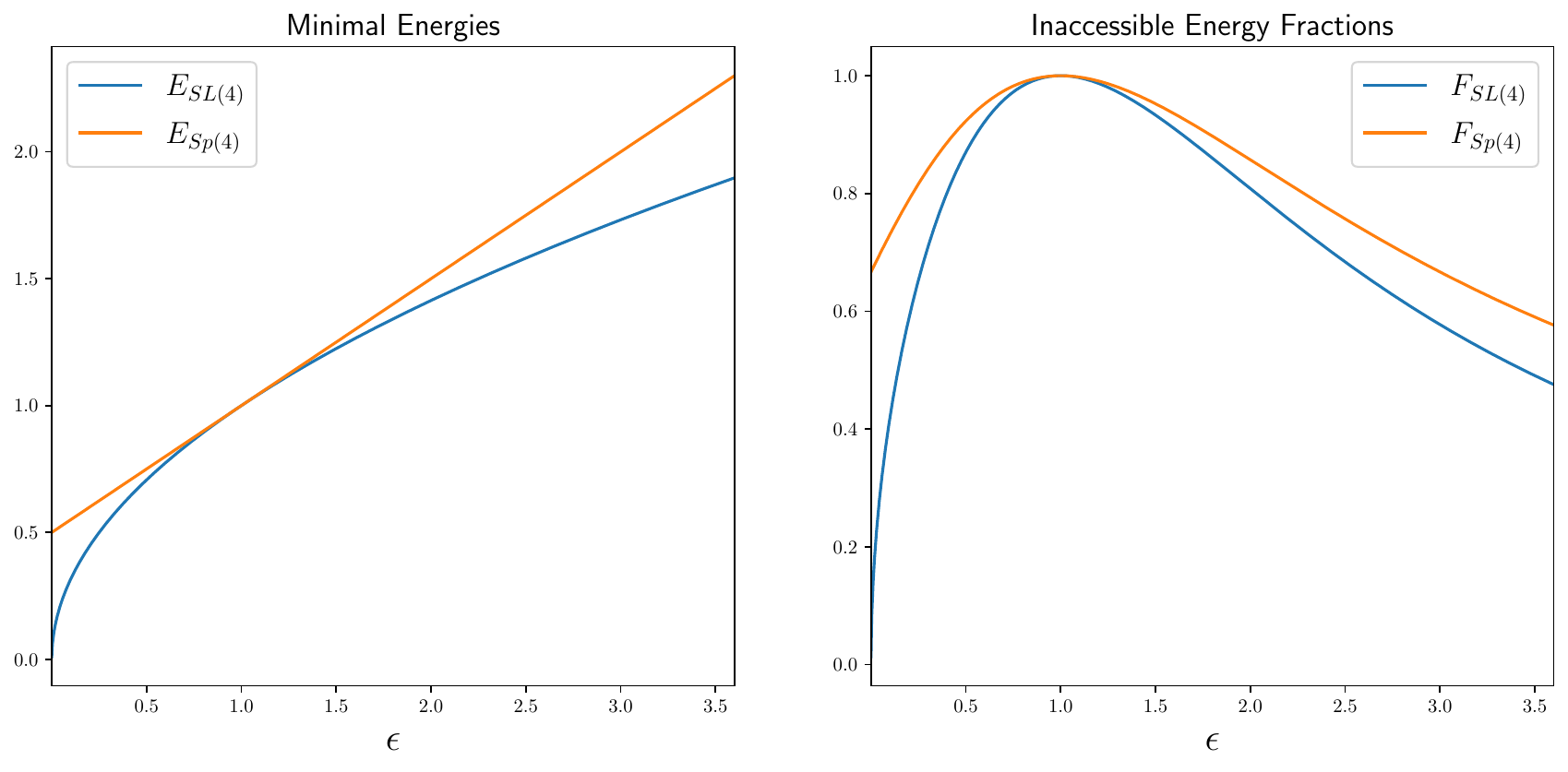}
        \caption{Minimal energies for Example \ref{ex1}.}
        \label{fig:enter-label}
    \end{figure}
\end{center}
It is interesting to note that $E_{Sp(4)}$ and $E_{SL(4)}$ can differ, either in difference or ratio, by an arbitrarily large amount. As we show in \ref{ex3}, Eq.\,(\ref{example}) is also the correct formula for the Gardner energy. This refutes a conjecture in \citep[p.4]{Qin2025} that the linear Gromov energy must be close to the (nonlinear) Gromov energy, $\inf_{\phi \in \text{Ham}(\mathcal{P})}E[f_{0} \circ \phi^{-1}]$.  
\FloatBarrier
\subsection{Example 2}\label{ex2}
Suppose we continue to consider the energy function $\mathcal{E}(\mathbf{z},\epsilon) = x^2 + \epsilon^2 y^2 + p_{x}^2 + p_{y}^2$ but $f_{0}$ is instead given by $f_{0}(\mathbf{z}) = \frac{1}{2(2\pi)^{2}} \exp(-\frac{1}{2}(p_{x}^2+p_{y}^2 + \frac{1}{4} x^2 + y^2))$. Then either by direct integration or by computing the Fourier transform of $f_{0}$, we learn that $H = \text{diag}(4, 1,1,1)$. $H$ is symplectically diagonalized by $S_{H} = \text{diag}(2^{-1/2}, 1, 2^{1/2},1)$, implying $D_{H} = \text{diag}(2,1)$. The linear Gardner energy is therefore 
\begin{equation}
    E_{SL(4)}(\epsilon) = 4 \sqrt{2\epsilon},
\end{equation}
whereas 
\begin{equation} \label{dis}
     E_{Sp(2n)}(\epsilon) =\begin{cases} 
      4 \epsilon + 2 & 0 < \epsilon < 1 \\
      2 \epsilon +4 & \epsilon \geq 1.
   \end{cases}
   .
\end{equation}
\begin{center}
    \begin{figure}[h!]
        \centering
        \includegraphics[width=\linewidth]{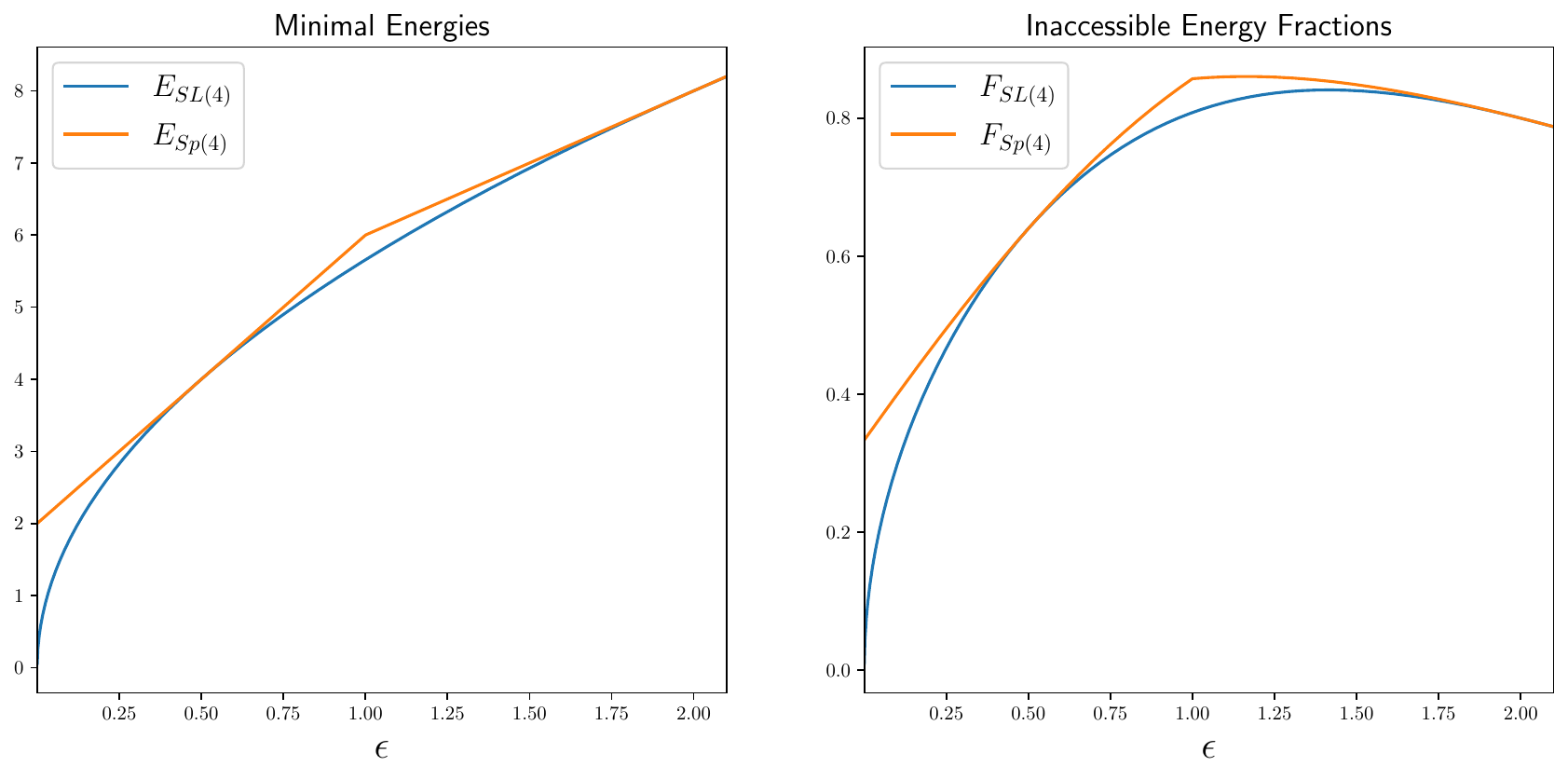}
        \caption{Minimal energies for Example \ref{ex2}.}
        \label{fig:enter-label}
    \end{figure}
\end{center}
\FloatBarrier
Equation (\ref{dis}) illustrates that even for smoothly varying $H$ and $V$, the linear Gromov energy does not need to vary smoothly. A lack of differentiability of $E_{Sp(2n)}(\epsilon)$ at points where either $D_{V}$ or $D_{H}$ has repeated symplectic eigenvalues is to be expected since the symplectic eigenvalue pairing in Eq.\;(\ref{gromov}) is generically reordered. At these points, a sort of saturation occurs and a "large" symplectomorphism exchanging canonical pairs must be applied before further energy can be extracted from $f_{0}$ via smoothly varying symplectomorphisms. The lack of such non-differentiable behavior in the linear Gardner energy exemplifies the flexibility of area-preserving maps in comparison to symplectic maps.
\ignorespacesafterend
\subsection{Example 3: Symplectic Equivalence of Ellipsoids}  \label{ex3}
We now consider a subclass of problems for which the Gardner and linear Gardner energy agree. We will use our observations to prove that two ellipsoids are linearly symplectomorphic iff their defining matrices have the same symplectic eigenvalues.

Given $2n \times 2n$ symmetric, positive-definite matrix $M$ we define the ellipsoid $El(M) \coloneqq \{ \mathbf{z} \in \mathbb{R}^{2n}: \mathbf{z}^{T} M \mathbf{z} \leq 1\}$. For $M$ and $M'$ fixed symmetric, positive-definite matrices with the same determinant, we take $\mathcal{E}(\mathbf{z}) = \mathbf{z}^{T} M \mathbf{z}$ and $f_{0} = \chi_{El(M')}$. Since $\det(M) = \det(M')$ the areas of $El(M')$ and $El(M)$ agree. It is easy to check that there is a $SL(2n)$ matrix $A$ mapping $El(M')$ to $El(M)$. It is also not hard to see that $E_{G} = E_{SL(2n)} = \int \mathcal{E}(\mathbf{z}) f_{0}(A^{-1} \mathbf{z})$. Indeed, suppose $\phi: \mathbb{R}^{2n} \to \mathbb{R}^{2n}$ is any area-preserving diffeomorphism not mapping $El(M')$ to $El(M)$. Then part of $\phi(El(M'))$ lies outside of $El(M)$, say the set $S$. Since $\mathcal{E}|_{S} > 1$, more energy could be extracted from $f_{0}$ by moving $S$ into $El(M)$. Given that the volume of $El(M)$ and $El(M')$ agree, this proves that the optimal amount of energy from $f_{0}$ is extracted by mapping $El(M')$ to $El(M)$. Conversely, this argument shows that if a map $\phi$ minimizes $E[f_{0} \circ \phi^{-1}]$ then $\phi$ necessarily maps $El(M')$ to $El(M)$. 

By our observations, we conclude that $El(M)$ and $El(M')$ are linearly symplectomorphic iff both $\det(M) = \det(M')$ and $E_{Sp(2n)} = E_{SL(2n)}$. For the given $f_{0}$ and $\mathcal{E}$, we have that $V_{0} = 0$, $\mathbf{c} = 0$, $V=M$, and $H  \propto (M')^{-1}$. The symplectic eigenvalues of $H$ are proportional to the reciprocals of the symplectic eigenvalues of $M'$ so, with a uniform proportionality constant, $\lambda_{i}^{H} \propto \frac{1}{\lambda^{M'}_{n+1-i}}$. Hence by equality condition for Eq.\,(\ref{comparison}), we learn that $El(M)$ and $El(M')$ are linearly symplectomorphic iff $\frac{\lambda_{i}^{M}}{\lambda_{i}^{M'}} \equiv \text{const}$ and $\det(M) = \det(M')$. These two conditions are combine to give $\lambda_{i}^{M} = \lambda_{i}^{M'}$, concluding the proof.
\subsection{Example 4: Gromov's non-Squeezing Theorem}
In this last example, we prove the affine Gromov nonsqueezing theorem using our energy-minimization theory. Namely, we will show for $R>r$ there is no affine symplectomorphism taking the ball $B(R)$ into the symplectic cylinder $\mathcal{Z}(r)$.

We define the energy $\mathcal{E}(\mathbf{z}) = x_{1}^2 + p_{1}^2$ so $V = e_{x_{1}}e_{x_{1}} + e_{p_{1}}e_{p_{1}}$. For $R > r$ we take $f_{0} =\chi_{B(R)}$. We compute $H = \frac{R^{2n+2}}{2n+2(2n)}|S^{2n-1}|\mathbb{I}_{2n}$. Using the result of \ref{appendix} to justify the limit swap $\lim_{\epsilon \to 0^{+}} \inf_{X \in Sp(2n)} \text{tr}(X (V + \epsilon \mathbb{I}_{2n}) X^{T} H) = \inf_{X \in Sp(2n)}\text{tr}(X V X^{T} H)$, we compute $E_{Sp(2n)} = \frac{R^{2n+2}}{n+1(2n)}|S^{2n-1}| = \int f_{0}(\mathbf{z})\mathcal{E}(\mathbf{z})d^{2n}\mathbf{z}$. This shows there is no way to reduce the energy of $f_{0}$ by symplectomorphisms. Under any affine symplectomorphism, $f_{0}$ transforms into the characteristic function of an ellipsoid. We will show for every ellipsoid $El \subset \mathcal{Z}(r)$ linearly symplectomorphic to $B(R)$ that $\int_{El} \mathcal{E} < E_{Sp(2n)}$, proving no such $El$ exists.

If $\mathbf{z}_{0} + El(M^{-1})$ is an ellipsoid in $\mathcal{Z}(r)$ then we can move $\mathbf{z}_{0} + El(M^{-1})$ to the coordinate origin without increasing its energy and while keeping it confined to $\mathcal{Z}(r)$. Without loss of generality, we thereby set $\mathbf{z}_{0} = 0$. We compute that
\begin{equation}
\int_{El(M^{-1})} x_{1}^2 + p_{1}^2 = \sqrt{\det(M)}\frac{|S^{2n-1}|}{2n(2n+2)} (M_{1,1} + M_{n+1,n+1}). 
\end{equation}
By assumption, $El(M^{-1}) \subset \mathcal{Z}(r)$, so for every unit vector $\mathbf{v}$ we have $\left< x_{1}, M^{1/2} \mathbf{v} \right>^2 + \left< p_{1}, M^{1/2} \mathbf{v}\right>^2 \leq r^2$. Since $M^{1/2}$ is self-adjoint, taking $\mathbf{v} =  ||M^{1/2}x_{1}||^{-1} M^{1/2}x_{1}$ shows that $||M^{1/2}x_{1}||^2 \leq r^2$. Similarly, we conclude $ ||M^{1/2} p_{1}||^2 \leq r^2$. These inequalities imply that $M_{1,1} \leq r^2$ and $M_{n+1,n+1} \leq r^2$. If, for the sake of contradiction, $B(R)$ is linearly symplectomorphic to the ellipsoid $El(M^{-1})$ then 
\begin{equation}
\frac{R^{2n+2}}{2n(n+1)}|S^{2n-1}| \leq \frac{r^2 \sqrt{|M|}|S^{2n-1}|}{2n(n+1)}. 
\end{equation}
Since the areas of $B(R)$ and $El(M)$ necessarily agee, it must be that $|M| = R^{4n}$. Simplifying, we have the erroneous inequality $R^2 \leq r^2$, the desired contradiction. 
 \begin{theorem}\label{main}
    For any $2n \times 2n$, symmetric, positive-definite matrices $V,H$ with respective symplectic eigenvalues $\lambda_{1}^{V} \geq \hdots \geq \lambda_{2n}^{V} > 0$ and $\lambda_{1}^{H} \geq \hdots \geq \lambda_{2n}^{H} > 0$, 
    \[
    \inf_{S \in Sp(2n)} \text{tr}(S V S^{T} H) = 2 \sum_{i=1}^{n} \lambda_{i}^{H} \lambda^V_{n+i-1}.
    \]
\end{theorem}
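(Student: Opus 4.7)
The upper bound $\inf_{S\in Sp(2n)}\text{tr}(SVS^TH)\leq 2\sum_{i=1}^n\lambda_i^H\lambda_{n+1-i}^V$ has been established in Eq.\,(\ref{optimal}) via the explicit pair-swap construction. I propose to prove the matching lower bound in four stages: reduction to Williamson normal form, existence of a minimizer via coercivity, derivation of the critical-point equation, and solution of that equation.

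Using Williamson's theorem, I would write $V = S_V^{-T}(\mathcal{D}_V\oplus\mathcal{D}_V)S_V^{-1}$ and $H = S_H^{-T}(\mathcal{D}_H\oplus\mathcal{D}_H)S_H^{-1}$. The substitution $T = S_H^{-1}SS_V^{-T}$ ranges over $Sp(2n)$ and, by cyclic invariance of the trace, converts the objective into $\text{tr}(T(\mathcal{D}_V\oplus\mathcal{D}_V)T^T(\mathcal{D}_H\oplus\mathcal{D}_H))$; hence I may assume $V = \mathcal{D}_V\oplus\mathcal{D}_V$ and $H = \mathcal{D}_H\oplus\mathcal{D}_H$ from the outset. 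The identity $\text{tr}(SVS^TH) = \|H^{1/2}SV^{1/2}\|_F^2 \geq \lambda_{\min}(V)\lambda_{\min}(H)\|S\|_F^2$ then shows the functional is coercive on $Sp(2n)$, so a minimizer $S^\star$ exists. Parameterizing tangent vectors as $S^\star\xi$ with $\xi = -JP \in \mathfrak{sp}(2n)$ (where $P$ is symmetric) and requiring the derivative of $\text{tr}((S^\star e^{s\xi})V(S^\star e^{s\xi})^TH)$ at $s = 0$ to vanish for every symmetric $P$ yields the critical-point equation $VKJ = JKV$ with $K = (S^\star)^THS^\star$.

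The crux is to solve this equation. Write $K = \begin{pmatrix} A & B \\ B^T & C\end{pmatrix}$ with $n\times n$ blocks, and assume first that $\mathcal{D}_V = \text{diag}(v_1,\ldots,v_n)$ has pairwise distinct entries. The $(1,2)$ and $(2,1)$ block conditions of $VKJ = JKV$ read $\mathcal{D}_V A = C\mathcal{D}_V$ and $\mathcal{D}_V C = A\mathcal{D}_V$, which combine to $\mathcal{D}_V^2 A = A\mathcal{D}_V^2$; distinctness of the $v_i^2$ then forces $A$ (and hence $C = A$) to be diagonal. The $(1,1)$ and $(2,2)$ block conditions read $\mathcal{D}_V B + B^T\mathcal{D}_V = 0$ and $B\mathcal{D}_V + \mathcal{D}_V B^T = 0$, giving entrywise the pair
\[
v_iB_{ij} + v_jB_{ji} = 0, \qquad v_jB_{ij} + v_iB_{ji} = 0.
\]
Viewed as a linear system in $(B_{ij},B_{ji})$, its determinant is $v_i^2-v_j^2 \neq 0$ for $i\neq j$, forcing $B = 0$. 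Hence $K = A\oplus A$ with $A$ diagonal, and direct computation of $JK$ shows that the symplectic eigenvalues of $K$ are precisely the diagonal entries of $A$; since $K$ lies in the $Sp(2n)$-orbit of $H$, these are a permutation of the $\lambda_i^H$, so $A_{ii} = \lambda_{\sigma(i)}^H$. The critical value is then $\text{tr}(VK) = 2\sum_i v_i\lambda_{\sigma(i)}^H$, and the classical rearrangement inequality identifies its minimum as $2\sum_i v_i\lambda_{n+1-i}^H$. The case of repeated $\lambda_i^V$ follows by perturbing $\mathcal{D}_V$ to have distinct entries and invoking continuity of the minimum in $V$, which in turn follows from uniform coercivity on a compact neighborhood.

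The main obstacle is the structural analysis in the last stage. The two entrywise conditions on $B$ look superficially like the single condition one would write in the $SL(2n)$ case, but they are actually independent whenever the symplectic eigenvalues of $V$ are distinct, and it is precisely this independence --- ultimately a reflection of the stiffer symplectic-orbit structure --- that forces $B$ to vanish and collapses the critical set to the permutation-indexed configurations $A = \text{diag}(\lambda_{\sigma(1)}^H,\ldots,\lambda_{\sigma(n)}^H)$.
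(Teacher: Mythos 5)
Your proposal is correct, and it proves the lower bound by a genuinely different route from the paper. The paper, after the same Williamson reduction, imports a Ky Fan--type trace-minimization theorem for indefinite scalar products due to Liang et al.\ and applies it to a doubled $4n\times 4n$ pencil via the Son--Stykel construction $B=\begin{bmatrix}0&J\\-J&0\end{bmatrix}$, $\mathcal{X}(S)=\tfrac{1}{\sqrt{2}}\begin{bmatrix}S&SJ^{T}\\SJ^{T}&S\end{bmatrix}$; the lower bound then falls out of the cited theorem in one line. You instead give a self-contained variational argument: coercivity of $S\mapsto\|H^{1/2}SV^{1/2}\|_{F}^{2}$ on the closed group $Sp(2n)$ guarantees a minimizer, differentiation along $S^{\star}e^{s\xi}$ with $\xi\in\mathfrak{sp}(2n)$ yields the correct stationarity condition $VKJ=JKV$ for $K=(S^{\star})^{T}HS^{\star}$, and the block analysis under distinct symplectic eigenvalues of $V$ correctly forces $K=A\oplus A$ with $A$ diagonal (your two-by-two system with determinant $v_i^2-v_j^2$ handles $i\neq j$; for completeness note the diagonal entries satisfy $2v_iB_{ii}=0$, so $B=0$ in full). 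Since symplectic congruence preserves the spectrum of $JK$, the diagonal of $A$ is a permutation of the $\lambda_i^{H}$, and the rearrangement inequality plus the continuity/perturbation step for repeated eigenvalues closes the argument. What each approach buys: the paper's is shorter but leans on a nontrivial external theorem; yours is longer but elementary, and as a bonus it classifies the entire critical set --- every critical value is $2\sum_i\lambda_i^{V}\lambda_{\sigma(i)}^{H}$ for some permutation $\sigma$ --- which is strictly more information than the infimum alone and illuminates the saturation/reordering phenomenon the paper observes in Example 2.
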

 \begin{proof} To complete the proof of Theorem $\ref{main}$, we have only to show that 
 \begin{equation}\label{18}
 \inf_{S \in Sp(2n)}\text{tr}(S^{T}(\mathcal{D}_{V} \oplus \mathcal{D}_{V}) S(\mathcal{D}_{H} \oplus \mathcal{D}_{H})) \geq 2 \sum_{i=1}^{n} \lambda_{i}^{H} \lambda_{n+1-i}^{V}. 
 \end{equation}
 We will prove Eq.\,(\ref{18}) with a method similar to Son and Stykel \cite{Son} using a theorem of Liang et al. \citep{generalizing}.
\begin{theorem}{\cite[p.489]{generalizing}}\label{GTM}
Let $A,B \in \mathbb{C}^{d \times d}$ and $D_{\pm} \in \mathbb{C}^{k_{\pm} \times k_{\pm}}$ be Hermitian matrices such that $A\neq 0$, $B$ has both positive and negative eigenvalues, $k_{+} + k_{-}= d$, $D_{\pm} \geq 0$, and the matrix pencil $A-\lambda B$ is positive semi-definite. Let
\[
J_{k} = \begin{bmatrix} \mathbb{I}_{k_{+}} & 0 \\ 0 & -\mathbb{I}_{k_{-}} \end{bmatrix}, \; \; \; D =  \begin{bmatrix} D_{+} & 0 \\ 0 & D_{-} \end{bmatrix},
\]
and let $\omega^{+}_{1} \geq \hdots \geq \omega_{k_{+}}^{+}\geq0 $ and $\omega_{1}^{-} \geq \hdots \geq \omega_{k_{-}}^{-} \geq 0$ be the eigenvalues of $D_{+}$ and $D_{-}$ respectively. Let $\lambda_{k_{-}}^{-} \leq \hdots \leq \lambda_{1}^- \leq \lambda_{1}^+ \leq \hdots \leq \lambda_{k_{+}}^+$ be the eigenvalues of the matrix pencil $A-\lambda B$. Then
\[
\inf_{X^\dagger B X=J_{k}} \text{tr}(D X^{\dagger} A X) = \sum_{i=1}^{k_{+}} \lambda_{i}^{+} \omega_{i}^{+} - \sum_{i=1}^{k_{-}}  \lambda_{i}^{-}\omega^{-}_{i}.
\]
\end{theorem}

For our purposes, we will take $d = 4n$, $k_{\pm}=2n$, $D_{+} = D_{-} = \mathcal{D}_{V} \oplus \mathcal{D}_{V}$, and $A = \mathcal{D}_{H} \oplus \mathcal{D}_{H}\oplus\mathcal{D}_{H} \oplus \mathcal{D}_{H}$. Trivially, $\omega_{2i-1}^{+} = \omega^+_{2i} = \omega_{2i-1}^{-} = \omega_{2i}^{-} = \lambda_{i}^{V}$ are the symplectic eigenvalues of $V$. As in \citep{Son}, for any $S \in Sp(2n)$ we define
\begin{equation}
    B = \begin{bmatrix}0 & J \\ -J & 0 \end{bmatrix}, \; \; \; \mathcal{X}(S) = \frac{1}{\sqrt{2}}\begin{bmatrix}
    S & SJ^{T} \\
    SJ^{T} & S 
\end{bmatrix}.
\end{equation}
$B$ is Hermitian and has eigenvalues $\pm 1$, each with multiplicity $2n$. Considering $\lambda =0$ shows the matrix pencil $A-\lambda B$ is positive definite. It's also easy to verify that the eigenvalues of the pencil $A-\lambda B$ are $\pm \lambda_{i}^{H}$, each with multiplicity two. Since $\mathcal{X}(S)^{\dagger} B \mathcal{X}(s) = J_{k}$, and since all the hypotheses of the Theorem \ref{GTM} are satisfied,
\begin{equation}\label{inequality}
\text{tr}(D \mathcal{X}^\dagger(S) A \mathcal{X}(S)) = 2 \, \text{tr}(S (\mathcal{D}_{V} \oplus \mathcal{D}_{V}) S^{T} (\mathcal{D}_{H} \oplus \mathcal{D}_{H})) \geq 4 \sum_{i=1}^{n} \lambda_{i}^{V} \lambda_{n+1-i}^{H}. 
\end{equation}
Since Eq.\,(\ref{inequality}) holds for every symplectic matrix, we conclude our proof of Theorem \ref{main}.
\end{proof}
 \section{Discussion}
 Efficiently extracting energy from particle distributions is vital for making fusion energy a reality. It is therefore critical to have useful bounds for the amount of extractable energy in a plasma. Complicating the minimization of energy in a system is the constraint that particles ideally evolve by Hamiltonian symplectomorphisms. With this constraint in mind, we were able to compute the minimal energy of a system in two extremes: one has arbitrarily fine control of phase space, and one has only very coarse control of phase space. The former energy we showed was equivalent to the Gardner energy $E_{G}$, while the latter we showed could be computed with Eq.\,(\ref{gromov}). Both of these cases were interesting, highlighting the flexibility vs. rigidity of symplectomorphisms. Nonetheless, our analysis seems unsatisfactory in places. The distribution function becomes pathological when minimizing energy using nonlinear symplectomorphisms. On the other hand, in coarse-graining the problem, we were left with transformations too blunt to efficiently extract energy from many classical non-thermal distributions (see \ref{lvsnl}). Given the rich theory already developed here, it is likely a rich medium-ground may be found. There are many non-equivalent approaches to explore, but we leave this for future work.

 Our analysis is not meant to conclude the question of energy extraction. We have not addressed thermodynamic or realistic technical constraints. Rather, we view this work as a first step towards understanding phase space engineering and the constraints of symplectomorphisms. To this end, much work remains to be done. Even within the limited problem of exacting energy with a restricted set of symplectomorphisms, many families of transforms may be fruitfully considered. Perhaps the most important family of transformations to consider are those corresponding to a Hamiltonian of the form $\mathcal{H} = \frac{1}{2}(p - q A(x,t))^2 + \phi(x,t)$, or even simply those of the form $\mathcal{H} = \frac{p^2}{2} + \phi(x,t)$. Understanding these flows would lead to a better understanding of electromagnetic phase-space engineering. We leave such a study to a future work.

While not always sharp, the linear Gromov energy gives a good benchmark for energy extraction, even when only the fluid moments of a distribution are known. The linear Gromov energy can also be useful when computing the allowed energy under non-linear symplectomorphisms. Composing nonlinear maps with optimal linear maps can give a computationally feasible way to compute the Gardner energy. In a future publication \citep{Bohlsen}, we will present a machine learning inspired numerical approach to computing upper bounds on the nonlinear Gromov energy with both free and periodic spatial boundary conditions. This approach will allow us to investigate the effect of regularizing the Gromov-Gardner problem by constraining the gradient of the symplectomorphism, a question which is functionally intractable within the linear theory presented here.

Perhaps the biggest goal of this work has been to show how physical problems and mathematical theory can fruitfully coalesce. In deriving the ground state energy of a particle distribution, we were led to prove a novel trace-minimization theorem. Many more results may yet be derived from the problem of phase-space engineering. These results would deepen our understanding of how particles can be manipulated, and might themselves be of mathematical interest. 
\section{Appendix}
\subsection{Degenerate Potentials} \label{appendix}
Suppose the potential matrix $V$ is merely positive semidefinite. Then for either $G= Sp(2n)$ or $G= SL(2n)$, we will show that $E_{G} = N V_{0} + \inf_{X \in G}\text{tr}(X V X^{T} H) = N V_{0} + \lim_{\epsilon \to 0} \inf_{X \in G}\text{tr}(X (V + \epsilon \mathbb{I}_{2n}) X^{T} H) $. Since $V+\epsilon \mathbb{I}_{2n}$ is positive definite for $\epsilon > 0$, this allows us to compute $E_{G}$ using Eq.\,($\ref{gardner}$) or Eq.\,($\ref{gromov}$). 
\begin{proof}
Define for $\epsilon \geq 0$ the nonnegative function $E_{G}(\epsilon) = NV_{0} + \inf_{X \in G} \text{tr}(X (V + \epsilon \mathbb{I}) XH)$. We must show $E(\epsilon)$ is continuous at $\epsilon = 0$. It is easy to verify that $E(\epsilon)$ is monotone increasing and in particular that $E(0) \leq E(\epsilon)$ for any $\epsilon \geq 0$. Let $\delta > 0$ be fixed. Let $X_{0}$ be such that $\text{tr}(X_{0}VX_{0}H) - E(0) < \eta/2$. Then for any $\epsilon >0$ such that $\epsilon \text{tr}(X_{0}VX_{0} H) < \delta/2$, the triangle inequality implies 
\begin{equation}
\delta > \text{tr}(X_{0}(V+\epsilon \mathbb{I}_{2n})X_{0}H) - E(0) \geq E(\epsilon) - E(0) \geq 0.
\end{equation}
Hence $\lim_{\epsilon \to 0} E(\epsilon) = E(0)$, proving the result. 
\end{proof}
\subsection{Gardner's Restacking and Hamiltonian Approximations}\label{Gromov=Gardner}
We elaborate on Gardner's restacking algorithm and Hamiltonian approximations thereof. For simplicity, we will assume $\mathcal{P} = \mathbb{R}^{2n}$, but there is no obstruction in allowing $\mathcal{P}$ to be a manifold. We will also make the simplifying assumption that $f_{0}$ is continuous, but this restriction can easily be lifted. As always, we assume $\mathcal{E}$ is bounded below and that $f_{0} \mathcal{E}$ is integrable.

When $\{\phi\}$ are required to be area-preserving, but possibly discontinuous, Gardner's restacking allows us to compute $E_{G} = \inf_{\{\phi \}} E[f_{0} \circ \phi^{-1}]$. To describe the algorithm in a mathematically rigorous manner, let $n \in \mathbb{N}$ be a natural number. Let $h = 2^{-n}$. We define the lattice $\Lambda^{h} = h \mathbb{Z}^{2n}$. For $\lambda \in \Lambda^h$, we define $S^h_{\lambda} = \lambda + [0,h]^{2n}$. We break phase space into a collection of disjoint squares of side length $h$, viz $\mathcal{P} = \bigsqcup_{\lambda \in \Lambda^h} S^h_{\lambda}$. 

Since $\Lambda^h$ is countable, we may index every element $\lambda \in \Lambda^h$ by a natural number, denoted by $\lambda_{i}$. We choose two such indexings. The first denoted by a subscript $f$ is such that $f_{0}(\lambda_{1,f}) \geq f_{0}(\lambda_{2,f}) \geq \hdots \geq 0$. The second indexing of $\Lambda^h$, denoted with a subscript $\mathcal{E}$, is such that $\mathcal{E}(\lambda_{1,\mathcal{E}}) \leq \mathcal{E} (\lambda_{2,\mathcal{E}}) \leq \hdots$.
We then define an bijective, area-preserving map $\phi^{(n)}$ sending $S^h_{\lambda_{i,f}} \mapsto S^h_{\lambda_{i,\mathcal{E}}}$, say the identity map on squares. Essentially, $\phi^{(n)}$ permutes equal-area sets in phase space until we have minimized the energy of the approximate distribution function $f_{0}^{(n)} = \sum_{\lambda \in \Lambda^{h}} f_{0}(\lambda^{h}) \chi_{S^h_{\lambda}}$. The ground state energy of $f^{(n)}_{0}$ is computed to be $E^{(n)} = \sum_{i=1}^{\infty} f(\lambda_{i,f})\mathcal{E}(\lambda_{i,\mathcal{E}})$. $E_{G}$ is computed as the limit of $E^{(n)}$ as $n$ tends towards infinity. 

We note that the distribution function after the energy minimizing mapping, formally denoted $f_{\infty} = \lim_{n \to \infty} f^{(n)} \circ \phi^{(n)}$, may or may not uniquely exist. In $\mathcal{E}$ has a level set of non-zero measure, then one generically has an infinite number of possible final states, $f_{\infty}$. If $\mathcal{E}$ is nondegenerate except around its minimum, then it is likely that $f_{\infty}$ is a well-posed measure. In 1D for example, if $f_{0}(p)$ and $\mathcal{E}(p) = \frac{p^2}{2}$, then $f_{\infty}(p)$ is the symmetric decreasing rearrangement of $f_{0}$ \citep{book, Day_1972}. 

To show that $\inf_{\phi \in \text{Ham}(\mathcal{P})} E[f_{0} \circ \phi^{-1}] = E_{G}$, we show that it is possible to approximate the steps in Gardner's restacking with Hamiltonian maps. 

For convenience, we will assume that $f_{0}$ is compactly supported. Let $h$ be as before. Let $R$ be a square containing the support of $f_{0}$. We can apply Gardner's restacking to $R$ as described above. Let $S_{1},\hdots,S_{N}$ be squares of side length $h$ covering $R$. Gardner's algorithm permutes these squares, mapping $S_{i} \mapsto S_{\sigma(i)}$ in some area-preserving manner. In order to so the same with Hamiltonian maps, let $\delta > 0$. We remove a $\frac{\delta}{2N}$ amount of area from the edges of each square $S_{i}$ making each trimmed square $\tilde{S}_{i}$ closed and disjoint. Then Katok \citep[p.545]{Katok} shows there is a Hamiltonian diffeomorphism $\psi^{(n)}$ supported in a neighborhood of $R$ almost mapping $\tilde{S}_{i} \mapsto \tilde{S}_{\sigma(i)}$ in the precise sense that $|(\psi^{(n)})^{-1}(\tilde{S}_{\sigma(i)}) \cap \tilde{S}_{i}| < \frac{\delta}{2N}$. As we send $\delta \to 0$, we closely approximate a map sending $S_{i}$ to $S_{\sigma(i)}$. The region of measure $\delta$ not mapped according to Gardner's algorithm contributes an error of at most $\delta ||\mathcal{E}||_{L^\infty(R)} ||f_{0}||_{L^\infty(R)}$ to the energy integral. Taking $\delta= \exp(1/h)$ and taking the limit $h \to 0$ therefore shows that $E_{G} = \inf_{\phi \in \text{Ham}(\mathcal{P})} E[f_{0} \circ \phi^{-1}]$. 

It is interesting to compare this result to that of Kolmes and Fisch \citep{Kolmes2020}. They showed that Gardner's restacking could be arbitrarily approximated by diffusive operations. Given that many fundamental processes can lead to apparent diffusion, it is perhaps not surprising that Gardner's restacking can be approximated by Hamiltonian diffeomorphisms. 
\subsection{Linear vs. Nonlinear Operations}\label{lvsnl}
While the "bluntness" of linear maps avoided the problem of over-controlling phase space, this same bluntness prevents energy from being extracted in many classical scenarios. Consider, for example, a 1D idealized bump on tail distribution 
\[
f_{0}(p) = \frac{n_{0}}{\sqrt{2 \pi T}} \exp(-p^2/2T) + n_1 \delta(p - p_{0}) = f_{\text{eq}} + f_{\text{bump}},
\]
where $T, n_{1}$ and $n_{0}$ are constants. The energy function is the classical energy $\mathcal{E} = \frac{p^2}{2}$. $f_{0}$ is spatially homogeneous, but it still makes sense to speak of the energy density. Using R-F waves, one can cause $f_{0}$ to form a quasi-linear plateau around $p = p_{0}$, extracting energy in the process. Optimally, one could move $f_{\text{bump}}$ to $p = 0$ by Gardner restacking. The resulting distribution function has the lowest possible energy since $f_{eq}$ is symmetric and decreasing. This extracts an energy per unit volume of $p_{0}^2 n_{1}$ giving an Gardner energy density of $\frac{n_{0}T}{2}$. 

With linear operations, none of the previous operations are allowed. The only $x$ independent, affine, area-preserving map is the shift map $p \mapsto p + \delta p$. To extract energy from $f_{0}$, all we can do is shift $f_{0}$ until there is no net momentum i.e. $\int pf_{0}(p- \delta p) dp = 0$. The energy density of $f_{0}(p - \delta p)$ is 
\begin{equation}
    \frac{n_{0}}{2}(\delta p^2 + T) + \frac{n_{1}(p_{0} + \delta p)^2}{2}, 
\end{equation}
which is minimized when $\delta p =- \frac{n_{1} p_{1}}{n_{0} + n_{1}}$. In this case, the energy extractable by linear maps is a pittance compared to the energy extractable by nonlinear maps.

We can play similar games in multiple dimensions with slightly more interesting results. For example, suppose we have a 3D bump on-tail
\[
f_{0}(\mathbf{p}) = n_{0}\left(\frac{1}{2 \pi T}\right)^{3/2} \exp(-p^2/2T) + n_1 \delta(\mathbf{p} - \mathbf{p}_{0}) = f_{\text{eq}} + f_{\text{bump}}.
\]
Then the Gardner energy is again obtained by moving $f_{\text{bump}}$ to $\mathbf{p} = 0$. Restricting to linear maps, we can again shift momentum space in the direction of $\mathbf{p}_{0}$ space until $\int \mathbf{p} f_{0}(\mathbf{p} - \delta \mathbf{p}_{0}) d^{3}\mathbf{p} = 0$. Now with more dimensions, we are additionally able to squeeze momentum space in the $\mathbf{p}_{0}$ direction while uniformly expanding the orthogonal plane. While still far from saturating the nonlinear energy bound, the linear energy is nonetheless closer due to the increased flexibility of higher dimensions.  

Area-preserving linear maps cannot alter the internal structure of $f_{0}$. It is no surprise that the linear Gardner and Gromov energies often depend only on the first few moments of $f_{0}$. This is both a blessing and a curse. At least in many cases, $E_{Sp(2n)}$ and $E_{SL(2n)}$ can be computed directly from a fluid theory. No information about the kinetics is needed. The linear energies can therefore serve as a useful upper bound for the nonlinear energies in the absence of an exactly known distribution $f_{0}$. However, when $f_{0}$ is known to some accuracy, and manipulations can be performed on scales smaller than the support of $f_{0}$, the linear theory may fail to yield useful results. 
\subsection{Quadratic Hamiltonian and Linear Maps}\label{Ham}
To understand how linear maps may arise naturally, consider Hamiltonian's equations on $\mathbb{R}^{2n}$,
\begin{equation}
\dot{\mathbf{z}} = J \nabla \mathcal{H}(\mathbf{z}). 
\end{equation}
If, at least to an approximation, $\mathcal{H}(z,t) =\frac{1}{2} \mathbf{z}^{t}\mathbb{H}\mathbf{z}$ is a quadratic polynomial then 
\begin{equation}\label{Ham flow}
\dot{\mathbf{z}} = J \mathbb{H} \mathbf{z}. 
\end{equation}
The flowmap $\phi_{t}$ is easily computed as $\phi_{t}(\mathbf{z}_{0}) = \exp(J\mathbb{H}t) \mathbf{z}_{0}$. The matrix $\exp(J\mathbb{H}t) $ is symplectic, and thus $\phi_{t}$ is of the allowed linear form. Alone, however, matrices of the form $\exp(J\mathbb{H}t)$ do not give the full symplectic group. To get $Sp(2n)$, we must consider time-dependent Hamiltonian flows. If $\mathbb{H}$ is allowed to be time dependent then the formal solution to Eq.\,(\ref{Ham flow}) is $\phi_{t}(\mathbf{z}_{0}) = T\exp(\int_{0}^{t} \mathbb{H}(s)ds)\mathbf{z}_{0}$, where $T\exp$ is the time-ordered exponential. It can be shown that matrices of the form $T\exp(J \int_{0}^{t} \mathbb{H}(s)ds)$ are precisely the symplectic matrices. In fact, only using that $Sp(2n)$ is connected, it was shown by W\" ustner that every symplectic matrix can be written as a product of the form  $\exp(J \mathbb{H}_{1}) \exp(J \mathbb{H}_{2})$ \citep{Wustner2003}. 

Affine linear maps are achieved by additionally considering Hamiltonians of the form $\mathcal{H} = -J \mathbf{b} \cdot z$, in which case Hamilton's equations read 
\begin{equation}
\dot{\mathbf{z}} = \mathbf{b}.
\end{equation}
Trivially, $\phi_{t}(\mathbf{z}_{0}) = \mathbf{z}_{0} + \mathbf{b}t$. Composing this flow with a linear flow gives all the affine linear flows. Thus affine symplectomorphisms arise naturally when the Hamiltonian of a system is well-approximated by a quadratic polynomial. 
 \begin{acknowledgments} 
This research was supported by the U.S. Department of Energy (DE-AC02-09CH11466)
and, in part, by DOE Grant No. DE-SC0016072.
\end{acknowledgments}
\bibliographystyle{apsrev4-2}
\bibliography{refs}

\begin{thebibliography}{42}%
\makeatletter
\providecommand \@ifxundefined [1]{%
 \@ifx{#1\undefined}
}%
\providecommand \@ifnum [1]{%
 \ifnum #1\expandafter \@firstoftwo
 \else \expandafter \@secondoftwo
 \fi
}%
\providecommand \@ifx [1]{%
 \ifx #1\expandafter \@firstoftwo
 \else \expandafter \@secondoftwo
 \fi
}%
\providecommand \natexlab [1]{#1}%
\providecommand \enquote  [1]{``#1''}%
\providecommand \bibnamefont  [1]{#1}%
\providecommand \bibfnamefont [1]{#1}%
\providecommand \citenamefont [1]{#1}%
\providecommand \href@noop [0]{\@secondoftwo}%
\providecommand \href [0]{\begingroup \@sanitize@url \@href}%
\providecommand \@href[1]{\@@startlink{#1}\@@href}%
\providecommand \@@href[1]{\endgroup#1\@@endlink}%
\providecommand \@sanitize@url [0]{\catcode `\\12\catcode `\$12\catcode `\&12\catcode `\#12\catcode `\^12\catcode `\_12\catcode `\%12\relax}%
\providecommand \@@startlink[1]{}%
\providecommand \@@endlink[0]{}%
\providecommand \url  [0]{\begingroup\@sanitize@url \@url }%
\providecommand \@url [1]{\endgroup\@href {#1}{\urlprefix }}%
\providecommand \urlprefix  [0]{URL }%
\providecommand \Eprint [0]{\href }%
\providecommand \doibase [0]{https://doi.org/}%
\providecommand \selectlanguage [0]{\@gobble}%
\providecommand \bibinfo  [0]{\@secondoftwo}%
\providecommand \bibfield  [0]{\@secondoftwo}%
\providecommand \translation [1]{[#1]}%
\providecommand \BibitemOpen [0]{}%
\providecommand \bibitemStop [0]{}%
\providecommand \bibitemNoStop [0]{.\EOS\space}%
\providecommand \EOS [0]{\spacefactor3000\relax}%
\providecommand \BibitemShut  [1]{\csname bibitem#1\endcsname}%
\let\auto@bib@innerbib\@empty
\bibitem [{\citenamefont {Budny}(2018)}]{Budny2018}%
  \BibitemOpen
  \bibfield  {author} {\bibinfo {author} {\bibfnamefont {R.}~\bibnamefont {Budny}},\ }\href {https://doi.org/10.1088/1741-4326/aaca04} {\bibfield  {journal} {\bibinfo  {journal} {Nuclear Fusion}\ }\textbf {\bibinfo {volume} {58}},\ \bibinfo {pages} {096011} (\bibinfo {year} {2018})}\BibitemShut {NoStop}%
\bibitem [{\citenamefont {Fisch}\ and\ \citenamefont {Rax}(1992)}]{Fisch1992}%
  \BibitemOpen
  \bibfield  {author} {\bibinfo {author} {\bibfnamefont {N.~J.}\ \bibnamefont {Fisch}}\ and\ \bibinfo {author} {\bibfnamefont {J.-M.}\ \bibnamefont {Rax}},\ }\href {https://doi.org/10.1103/physrevlett.69.612} {\bibfield  {journal} {\bibinfo  {journal} {Physical Review Letters}\ }\textbf {\bibinfo {volume} {69}},\ \bibinfo {pages} {612} (\bibinfo {year} {1992})}\BibitemShut {NoStop}%
\bibitem [{\citenamefont {Fisch}(1995)}]{Fisch1995a}%
  \BibitemOpen
  \bibfield  {author} {\bibinfo {author} {\bibfnamefont {N.~J.}\ \bibnamefont {Fisch}},\ }\href {https://doi.org/10.1063/1.871454} {\bibfield  {journal} {\bibinfo  {journal} {Physics of Plasmas}\ }\textbf {\bibinfo {volume} {2}},\ \bibinfo {pages} {2375} (\bibinfo {year} {1995})}\BibitemShut {NoStop}%
\bibitem [{\citenamefont {Fisch}\ and\ \citenamefont {Herrmann}(1995)}]{Fisch1995}%
  \BibitemOpen
  \bibfield  {author} {\bibinfo {author} {\bibfnamefont {N.}~\bibnamefont {Fisch}}\ and\ \bibinfo {author} {\bibfnamefont {M.}~\bibnamefont {Herrmann}},\ }\href {https://doi.org/10.1088/0029-5515/35/12/i40} {\bibfield  {journal} {\bibinfo  {journal} {Nuclear Fusion}\ }\textbf {\bibinfo {volume} {35}},\ \bibinfo {pages} {1753} (\bibinfo {year} {1995})}\BibitemShut {NoStop}%
\bibitem [{\citenamefont {Herrmann}\ and\ \citenamefont {Fisch}(1997)}]{Herrmann1997}%
  \BibitemOpen
  \bibfield  {author} {\bibinfo {author} {\bibfnamefont {M.~C.}\ \bibnamefont {Herrmann}}\ and\ \bibinfo {author} {\bibfnamefont {N.~J.}\ \bibnamefont {Fisch}},\ }\href {https://doi.org/10.1103/physrevlett.79.1495} {\bibfield  {journal} {\bibinfo  {journal} {Physical Review Letters}\ }\textbf {\bibinfo {volume} {79}},\ \bibinfo {pages} {1495} (\bibinfo {year} {1997})}\BibitemShut {NoStop}%
\bibitem [{\citenamefont {Ochs}\ \emph {et~al.}(2015)\citenamefont {Ochs}, \citenamefont {Bertelli},\ and\ \citenamefont {Fisch}}]{Ochs2015}%
  \BibitemOpen
  \bibfield  {author} {\bibinfo {author} {\bibfnamefont {I.~E.}\ \bibnamefont {Ochs}}, \bibinfo {author} {\bibfnamefont {N.}~\bibnamefont {Bertelli}},\ and\ \bibinfo {author} {\bibfnamefont {N.~J.}\ \bibnamefont {Fisch}},\ }\href {https://doi.org/10.1063/1.4928903} {\bibfield  {journal} {\bibinfo  {journal} {Physics of Plasmas}\ }\textbf {\bibinfo {volume} {22}},\ \bibinfo {pages} {082119} (\bibinfo {year} {2015})}\BibitemShut {NoStop}%
\bibitem [{\citenamefont {Bussard}\ and\ \citenamefont {Krall}(1994)}]{Bussard1994}%
  \BibitemOpen
  \bibfield  {author} {\bibinfo {author} {\bibfnamefont {R.~W.}\ \bibnamefont {Bussard}}\ and\ \bibinfo {author} {\bibfnamefont {N.~A.}\ \bibnamefont {Krall}},\ }\href {https://doi.org/10.13182/fst94-a30317} {\bibfield  {journal} {\bibinfo  {journal} {Fusion Technology}\ }\textbf {\bibinfo {volume} {26}},\ \bibinfo {pages} {1326} (\bibinfo {year} {1994})}\BibitemShut {NoStop}%
\bibitem [{\citenamefont {Rostoker}\ \emph {et~al.}(1997)\citenamefont {Rostoker}, \citenamefont {Binderbauer},\ and\ \citenamefont {Monkhorst}}]{Rostoker1997}%
  \BibitemOpen
  \bibfield  {author} {\bibinfo {author} {\bibfnamefont {N.}~\bibnamefont {Rostoker}}, \bibinfo {author} {\bibfnamefont {M.~W.}\ \bibnamefont {Binderbauer}},\ and\ \bibinfo {author} {\bibfnamefont {H.~J.}\ \bibnamefont {Monkhorst}},\ }\href {https://doi.org/10.1126/science.278.5342.1419} {\bibfield  {journal} {\bibinfo  {journal} {Science}\ }\textbf {\bibinfo {volume} {278}},\ \bibinfo {pages} {1419} (\bibinfo {year} {1997})}\BibitemShut {NoStop}%
\bibitem [{\citenamefont {Nevins}(1998{\natexlab{a}})}]{Nevins1998}%
  \BibitemOpen
  \bibfield  {author} {\bibinfo {author} {\bibfnamefont {W.~M.}\ \bibnamefont {Nevins}},\ }\href {https://doi.org/10.1023/a:1022513215080} {\bibfield  {journal} {\bibinfo  {journal} {Journal of Fusion Energy}\ }\textbf {\bibinfo {volume} {17}},\ \bibinfo {pages} {25} (\bibinfo {year} {1998}{\natexlab{a}})}\BibitemShut {NoStop}%
\bibitem [{\citenamefont {Volosov}(2006)}]{Volosov2006}%
  \BibitemOpen
  \bibfield  {author} {\bibinfo {author} {\bibfnamefont {V.}~\bibnamefont {Volosov}},\ }\href {https://doi.org/10.1088/0029-5515/46/8/007} {\bibfield  {journal} {\bibinfo  {journal} {Nuclear Fusion}\ }\textbf {\bibinfo {volume} {46}},\ \bibinfo {pages} {820} (\bibinfo {year} {2006})}\BibitemShut {NoStop}%
\bibitem [{\citenamefont {Ochs}\ and\ \citenamefont {Fisch}(2021)}]{Ochs2021}%
  \BibitemOpen
  \bibfield  {author} {\bibinfo {author} {\bibfnamefont {I.~E.}\ \bibnamefont {Ochs}}\ and\ \bibinfo {author} {\bibfnamefont {N.~J.}\ \bibnamefont {Fisch}},\ }\href {https://doi.org/10.1103/physrevlett.127.025003} {\bibfield  {journal} {\bibinfo  {journal} {Physical Review Letters}\ }\textbf {\bibinfo {volume} {127}},\ \bibinfo {pages} {025003} (\bibinfo {year} {2021})}\BibitemShut {NoStop}%
\bibitem [{\citenamefont {Kolmes}\ \emph {et~al.}(2022)\citenamefont {Kolmes}, \citenamefont {Ochs},\ and\ \citenamefont {Fisch}}]{Kolmes2022a}%
  \BibitemOpen
  \bibfield  {author} {\bibinfo {author} {\bibfnamefont {E.~J.}\ \bibnamefont {Kolmes}}, \bibinfo {author} {\bibfnamefont {I.~E.}\ \bibnamefont {Ochs}},\ and\ \bibinfo {author} {\bibfnamefont {N.~J.}\ \bibnamefont {Fisch}},\ }\href {https://doi.org/10.1063/5.0119434} {\bibfield  {journal} {\bibinfo  {journal} {Physics of Plasmas}\ }\textbf {\bibinfo {volume} {29}},\ \bibinfo {pages} {110701} (\bibinfo {year} {2022})}\BibitemShut {NoStop}%
\bibitem [{\citenamefont {Ochs}\ \emph {et~al.}(2022)\citenamefont {Ochs}, \citenamefont {Kolmes}, \citenamefont {Mlodik}, \citenamefont {Rubin},\ and\ \citenamefont {Fisch}}]{Ochs2022}%
  \BibitemOpen
  \bibfield  {author} {\bibinfo {author} {\bibfnamefont {I.~E.}\ \bibnamefont {Ochs}}, \bibinfo {author} {\bibfnamefont {E.~J.}\ \bibnamefont {Kolmes}}, \bibinfo {author} {\bibfnamefont {M.~E.}\ \bibnamefont {Mlodik}}, \bibinfo {author} {\bibfnamefont {T.}~\bibnamefont {Rubin}},\ and\ \bibinfo {author} {\bibfnamefont {N.~J.}\ \bibnamefont {Fisch}},\ }\href {https://doi.org/10.1103/physreve.106.055215} {\bibfield  {journal} {\bibinfo  {journal} {Physical Review E}\ }\textbf {\bibinfo {volume} {106}},\ \bibinfo {pages} {055215} (\bibinfo {year} {2022})}\BibitemShut {NoStop}%
\bibitem [{\citenamefont {Munirov}\ and\ \citenamefont {Fisch}(2023)}]{Munirov2023}%
  \BibitemOpen
  \bibfield  {author} {\bibinfo {author} {\bibfnamefont {V.~R.}\ \bibnamefont {Munirov}}\ and\ \bibinfo {author} {\bibfnamefont {N.~J.}\ \bibnamefont {Fisch}},\ }\href {https://doi.org/10.1103/physreve.107.065205} {\bibfield  {journal} {\bibinfo  {journal} {Physical Review E}\ }\textbf {\bibinfo {volume} {107}},\ \bibinfo {pages} {065205} (\bibinfo {year} {2023})}\BibitemShut {NoStop}%
\bibitem [{\citenamefont {Mlodik}\ \emph {et~al.}(2023)\citenamefont {Mlodik}, \citenamefont {Munirov}, \citenamefont {Rubin},\ and\ \citenamefont {Fisch}}]{Mlodik2023}%
  \BibitemOpen
  \bibfield  {author} {\bibinfo {author} {\bibfnamefont {M.~E.}\ \bibnamefont {Mlodik}}, \bibinfo {author} {\bibfnamefont {V.~R.}\ \bibnamefont {Munirov}}, \bibinfo {author} {\bibfnamefont {T.}~\bibnamefont {Rubin}},\ and\ \bibinfo {author} {\bibfnamefont {N.~J.}\ \bibnamefont {Fisch}},\ }\href {https://doi.org/10.1063/5.0140508} {\bibfield  {journal} {\bibinfo  {journal} {Physics of Plasmas}\ }\textbf {\bibinfo {volume} {30}},\ \bibinfo {pages} {043301} (\bibinfo {year} {2023})}\BibitemShut {NoStop}%
\bibitem [{\citenamefont {Ochs}\ and\ \citenamefont {Fisch}(2024)}]{Ochs2024}%
  \BibitemOpen
  \bibfield  {author} {\bibinfo {author} {\bibfnamefont {I.~E.}\ \bibnamefont {Ochs}}\ and\ \bibinfo {author} {\bibfnamefont {N.~J.}\ \bibnamefont {Fisch}},\ }\href {https://doi.org/10.1063/5.0184945} {\bibfield  {journal} {\bibinfo  {journal} {Physics of Plasmas}\ }\textbf {\bibinfo {volume} {31}},\ \bibinfo {pages} {012503} (\bibinfo {year} {2024})}\BibitemShut {NoStop}%
\bibitem [{\citenamefont {Nevins}(1998{\natexlab{b}})}]{Nevins1998a}%
  \BibitemOpen
  \bibfield  {author} {\bibinfo {author} {\bibfnamefont {W.~M.}\ \bibnamefont {Nevins}},\ }\href {https://doi.org/10.1126/science.281.5375.307a} {\bibfield  {journal} {\bibinfo  {journal} {Science}\ }\textbf {\bibinfo {volume} {281}},\ \bibinfo {pages} {307} (\bibinfo {year} {1998}{\natexlab{b}})}\BibitemShut {NoStop}%
\bibitem [{\citenamefont {Qin}(2024)}]{Qin2024}%
  \BibitemOpen
  \bibfield  {author} {\bibinfo {author} {\bibfnamefont {H.}~\bibnamefont {Qin}},\ }\href {https://doi.org/10.1063/5.0203707} {\bibfield  {journal} {\bibinfo  {journal} {Physics of Plasmas}\ }\textbf {\bibinfo {volume} {31}},\ \bibinfo {pages} {050601} (\bibinfo {year} {2024})}\BibitemShut {NoStop}%
\bibitem [{\citenamefont {Qin}\ \emph {et~al.}(2025)\citenamefont {Qin}, \citenamefont {Kolmes}, \citenamefont {Updike}, \citenamefont {Bohlsen},\ and\ \citenamefont {Fisch}}]{Qin2025}%
  \BibitemOpen
  \bibfield  {author} {\bibinfo {author} {\bibfnamefont {H.}~\bibnamefont {Qin}}, \bibinfo {author} {\bibfnamefont {E.~J.}\ \bibnamefont {Kolmes}}, \bibinfo {author} {\bibfnamefont {M.}~\bibnamefont {Updike}}, \bibinfo {author} {\bibfnamefont {N.}~\bibnamefont {Bohlsen}},\ and\ \bibinfo {author} {\bibfnamefont {N.~J.}\ \bibnamefont {Fisch}},\ }\href {https://doi.org/10.1103/physreve.111.025205} {\bibfield  {journal} {\bibinfo  {journal} {Physical Review E}\ }\textbf {\bibinfo {volume} {111}},\ \bibinfo {pages} {025205} (\bibinfo {year} {2025})}\BibitemShut {NoStop}%
\bibitem [{\citenamefont {Gardner}(1963)}]{Gardner}%
  \BibitemOpen
  \bibfield  {author} {\bibinfo {author} {\bibfnamefont {C.~S.}\ \bibnamefont {Gardner}},\ }\href {https://doi.org/10.1063/1.1706823} {\bibfield  {journal} {\bibinfo  {journal} {The Physics of Fluids}\ }\textbf {\bibinfo {volume} {6}},\ \bibinfo {pages} {839} (\bibinfo {year} {1963})}\BibitemShut {NoStop}%
\bibitem [{\citenamefont {Kolmes}\ and\ \citenamefont {Fisch}(2020)}]{Kolmes2020}%
  \BibitemOpen
  \bibfield  {author} {\bibinfo {author} {\bibfnamefont {E.~J.}\ \bibnamefont {Kolmes}}\ and\ \bibinfo {author} {\bibfnamefont {N.~J.}\ \bibnamefont {Fisch}},\ }\href {https://doi.org/10.1103/physreve.102.063209} {\bibfield  {journal} {\bibinfo  {journal} {Physical Review E}\ }\textbf {\bibinfo {volume} {102}},\ \bibinfo {pages} {063209} (\bibinfo {year} {2020})}\BibitemShut {NoStop}%
\bibitem [{\citenamefont {Kolmes}\ and\ \citenamefont {Fisch}(2024)}]{Kolmes2024}%
  \BibitemOpen
  \bibfield  {author} {\bibinfo {author} {\bibfnamefont {E.~J.}\ \bibnamefont {Kolmes}}\ and\ \bibinfo {author} {\bibfnamefont {N.~J.}\ \bibnamefont {Fisch}},\ }\href {https://doi.org/10.1063/5.0202456} {\bibfield  {journal} {\bibinfo  {journal} {Physics of Plasmas}\ }\textbf {\bibinfo {volume} {31}},\ \bibinfo {pages} {042109} (\bibinfo {year} {2024})}\BibitemShut {NoStop}%
\bibitem [{\citenamefont {Kolmes}\ and\ \citenamefont {Fisch}(2022)}]{Kolmes2022}%
  \BibitemOpen
  \bibfield  {author} {\bibinfo {author} {\bibfnamefont {E.~J.}\ \bibnamefont {Kolmes}}\ and\ \bibinfo {author} {\bibfnamefont {N.~J.}\ \bibnamefont {Fisch}},\ }\href {https://doi.org/10.1103/physreve.106.055209} {\bibfield  {journal} {\bibinfo  {journal} {Physical Review E}\ }\textbf {\bibinfo {volume} {106}},\ \bibinfo {pages} {055209} (\bibinfo {year} {2022})}\BibitemShut {NoStop}%
\bibitem [{\citenamefont {Dodin}\ and\ \citenamefont {Fisch}(2005)}]{Dodin2005}%
  \BibitemOpen
  \bibfield  {author} {\bibinfo {author} {\bibfnamefont {I.~Y.}\ \bibnamefont {Dodin}}\ and\ \bibinfo {author} {\bibfnamefont {N.~J.}\ \bibnamefont {Fisch}},\ }\href {https://doi.org/10.1016/j.physleta.2005.04.078} {\bibfield  {journal} {\bibinfo  {journal} {Physics Letters A}\ }\textbf {\bibinfo {volume} {341}},\ \bibinfo {pages} {187} (\bibinfo {year} {2005})}\BibitemShut {NoStop}%
\bibitem [{\citenamefont {Dacorogna}\ and\ \citenamefont {Moser}(1990)}]{dac}%
  \BibitemOpen
  \bibfield  {author} {\bibinfo {author} {\bibfnamefont {B.}~\bibnamefont {Dacorogna}}\ and\ \bibinfo {author} {\bibfnamefont {J.}~\bibnamefont {Moser}},\ }\href {http://www.numdam.org/item/AIHPC_1990__7_1_1_0/} {\bibfield  {journal} {\bibinfo  {journal} {Annales de l'I.H.P. Analyse non lin\'eaire}\ }\textbf {\bibinfo {volume} {7}},\ \bibinfo {pages} {1} (\bibinfo {year} {1990})}\BibitemShut {NoStop}%
\bibitem [{\citenamefont {Gromov}(1985)}]{Gromov}%
  \BibitemOpen
  \bibfield  {author} {\bibinfo {author} {\bibfnamefont {M.}~\bibnamefont {Gromov}},\ }\href {https://doi.org/10.1007/bf01388806} {\bibfield  {journal} {\bibinfo  {journal} {Inventiones Mathematicae}\ }\textbf {\bibinfo {volume} {82}},\ \bibinfo {pages} {307} (\bibinfo {year} {1985})}\BibitemShut {NoStop}%
\bibitem [{\citenamefont {Hofer}\ and\ \citenamefont {Zehnder}(1994)}]{Hofer}%
  \BibitemOpen
  \bibfield  {author} {\bibinfo {author} {\bibfnamefont {H.}~\bibnamefont {Hofer}}\ and\ \bibinfo {author} {\bibfnamefont {E.}~\bibnamefont {Zehnder}},\ }\href {https://doi.org/10.1007/978-3-0348-8540-9} {\emph {\bibinfo {title} {Symplectic invariants and {H}amiltonian dynamics}}}\ (\bibinfo  {publisher} {Birkhauser Verlag},\ \bibinfo {year} {1994})\BibitemShut {NoStop}%
\bibitem [{\citenamefont {Darboux}(1882)}]{Darboux1882}%
  \BibitemOpen
  \bibfield  {author} {\bibinfo {author} {\bibfnamefont {G.}~\bibnamefont {Darboux}},\ }\href {http://eudml.org/doc/85152} {\bibfield  {journal} {\bibinfo  {journal} {Bulletin des Sciences Math\'ematiques et Astronomiques}\ }\textbf {\bibinfo {volume} {6}},\ \bibinfo {pages} {49} (\bibinfo {year} {1882})}\BibitemShut {NoStop}%
\bibitem [{\citenamefont {McDuff}\ and\ \citenamefont {Salamon}(2017)}]{Mcduff}%
  \BibitemOpen
  \bibfield  {author} {\bibinfo {author} {\bibfnamefont {D.}~\bibnamefont {McDuff}}\ and\ \bibinfo {author} {\bibfnamefont {D.}~\bibnamefont {Salamon}},\ }\href {https://doi.org/10.1093/oso/9780198794899.001.0001} {\emph {\bibinfo {title} {Introduction to Symplectic Topology}}}\ (\bibinfo  {publisher} {Oxford University Press},\ \bibinfo {year} {2017})\BibitemShut {NoStop}%
\bibitem [{\citenamefont {Katok}(1973)}]{Katok}%
  \BibitemOpen
  \bibfield  {author} {\bibinfo {author} {\bibfnamefont {A.~B.}\ \bibnamefont {Katok}},\ }\href {https://doi.org/10.1070/im1973v007n03abeh001958} {\bibfield  {journal} {\bibinfo  {journal} {Mathematics of the USSR-Izvestiya}\ }\textbf {\bibinfo {volume} {7}},\ \bibinfo {pages} {535} (\bibinfo {year} {1973})}\BibitemShut {NoStop}%
\bibitem [{\citenamefont {Sackel}\ \emph {et~al.}(2024)\citenamefont {Sackel}, \citenamefont {Song}, \citenamefont {Varolgunes}, \citenamefont {Zhu},\ and\ \citenamefont {Brendel}}]{qualifications}%
  \BibitemOpen
  \bibfield  {author} {\bibinfo {author} {\bibfnamefont {K.}~\bibnamefont {Sackel}}, \bibinfo {author} {\bibfnamefont {A.}~\bibnamefont {Song}}, \bibinfo {author} {\bibfnamefont {U.}~\bibnamefont {Varolgunes}}, \bibinfo {author} {\bibfnamefont {J.}~\bibnamefont {Zhu}},\ and\ \bibinfo {author} {\bibfnamefont {J.}~\bibnamefont {Brendel}},\ }\href {https://doi.org/10.2140/gt.2024.28.1113} {\bibfield  {journal} {\bibinfo  {journal} {Geometry \& Topology}\ }\textbf {\bibinfo {volume} {28}},\ \bibinfo {pages} {1113} (\bibinfo {year} {2024})}\BibitemShut {NoStop}%
\bibitem [{\citenamefont {Davidson}\ and\ \citenamefont {Qin}(2001)}]{Davidson01-all}%
  \BibitemOpen
  \bibfield  {author} {\bibinfo {author} {\bibfnamefont {R.~C.}\ \bibnamefont {Davidson}}\ and\ \bibinfo {author} {\bibfnamefont {H.}~\bibnamefont {Qin}},\ }\href {https://doi.org/10.1142/p250} {\emph {\bibinfo {title} {Physics of Intense Charged Particle Beams in High Energy Accelerators}}}\ (\bibinfo  {publisher} {Imperial College Press and World Scientific},\ \bibinfo {address} {Singapore},\ \bibinfo {year} {2001})\BibitemShut {NoStop}%
\bibitem [{\citenamefont {Qin}\ and\ \citenamefont {Davidson}()}]{Qin2011}%
  \BibitemOpen
  \bibfield  {author} {\bibinfo {author} {\bibfnamefont {H.}~\bibnamefont {Qin}}\ and\ \bibinfo {author} {\bibfnamefont {R.~C.}\ \bibnamefont {Davidson}},\ }\href {https://doi.org/10.1063/1.3574919} {\bibfield  {journal} {\bibinfo  {journal} {Physics of Plasmas}\ }\textbf {\bibinfo {volume} {18}},\ \bibinfo {pages} {056708}}\BibitemShut {NoStop}%
\bibitem [{\citenamefont {Qin}\ \emph {et~al.}()\citenamefont {Qin}, \citenamefont {Davidson},\ and\ \citenamefont {Logan}}]{Qin2010}%
  \BibitemOpen
  \bibfield  {author} {\bibinfo {author} {\bibfnamefont {H.}~\bibnamefont {Qin}}, \bibinfo {author} {\bibfnamefont {R.~C.}\ \bibnamefont {Davidson}},\ and\ \bibinfo {author} {\bibfnamefont {B.~G.}\ \bibnamefont {Logan}},\ }\href {https://doi.org/10.1103/PhysRevLett.104.254801} {\bibfield  {journal} {\bibinfo  {journal} {Physical Review Letters}\ }\textbf {\bibinfo {volume} {104}},\ \bibinfo {pages} {254801}}\BibitemShut {NoStop}%
\bibitem [{\citenamefont {Williamson}(1936)}]{williamson}%
  \BibitemOpen
  \bibfield  {author} {\bibinfo {author} {\bibfnamefont {J.}~\bibnamefont {Williamson}},\ }\href {https://doi.org/10.2307/2371062} {\bibfield  {journal} {\bibinfo  {journal} {American Journal of Mathematics}\ }\textbf {\bibinfo {volume} {58}},\ \bibinfo {pages} {141} (\bibinfo {year} {1936})}\BibitemShut {NoStop}%
\bibitem [{\citenamefont {Ikramov}(2018)}]{JM}%
  \BibitemOpen
  \bibfield  {author} {\bibinfo {author} {\bibfnamefont {K.}~\bibnamefont {Ikramov}},\ }\href {https://doi.org/10.3103/S0278641918010041} {\bibfield  {journal} {\bibinfo  {journal} {Moscow University Computational Mathematics and Cybernetics}\ }\textbf {\bibinfo {volume} {42}},\ \bibinfo {pages} {1} (\bibinfo {year} {2018})}\BibitemShut {NoStop}%
\bibitem [{\citenamefont {Son}\ and\ \citenamefont {Stykel}(2022)}]{Son}%
  \BibitemOpen
  \bibfield  {author} {\bibinfo {author} {\bibfnamefont {N.~T.}\ \bibnamefont {Son}}\ and\ \bibinfo {author} {\bibfnamefont {T.}~\bibnamefont {Stykel}},\ }\href {https://doi.org/10.13001/ela.2022.7351} {\bibfield  {journal} {\bibinfo  {journal} {The Electronic Journal of Linear Algebra}\ }\textbf {\bibinfo {volume} {38}},\ \bibinfo {pages} {607 } (\bibinfo {year} {2022})}\BibitemShut {NoStop}%
\bibitem [{\citenamefont {Liang}\ \emph {et~al.}(2023)\citenamefont {Liang}, \citenamefont {Wang}, \citenamefont {Zhang},\ and\ \citenamefont {Li}}]{generalizing}%
  \BibitemOpen
  \bibfield  {author} {\bibinfo {author} {\bibfnamefont {X.}~\bibnamefont {Liang}}, \bibinfo {author} {\bibfnamefont {L.}~\bibnamefont {Wang}}, \bibinfo {author} {\bibfnamefont {L.-H.}\ \bibnamefont {Zhang}},\ and\ \bibinfo {author} {\bibfnamefont {R.-C.}\ \bibnamefont {Li}},\ }\href {https://doi.org/https://doi.org/10.1016/j.laa.2022.10.012} {\bibfield  {journal} {\bibinfo  {journal} {Linear Algebra and its Applications}\ }\textbf {\bibinfo {volume} {656}},\ \bibinfo {pages} {483} (\bibinfo {year} {2023})}\BibitemShut {NoStop}%
\bibitem [{\citenamefont {Bohlsen}\ \emph {et~al.}(2025)\citenamefont {Bohlsen}, \citenamefont {Dodin},\ and\ \citenamefont {Qin}}]{Bohlsen}%
  \BibitemOpen
  \bibfield  {author} {\bibinfo {author} {\bibfnamefont {N.}~\bibnamefont {Bohlsen}}, \bibinfo {author} {\bibfnamefont {I.}~\bibnamefont {Dodin}},\ and\ \bibinfo {author} {\bibfnamefont {H.}~\bibnamefont {Qin}},\ }\href@noop {} {\bibfield  {journal} {\bibinfo  {journal} {To be Published}\ } (\bibinfo {year} {2025})}\BibitemShut {NoStop}%
\bibitem [{\citenamefont {Baernstein~II}(2019)}]{book}%
  \BibitemOpen
  \bibfield  {author} {\bibinfo {author} {\bibfnamefont {A.}~\bibnamefont {Baernstein~II}},\ }\href {https://doi.org/10.1017/9781139020244} {\emph {\bibinfo {title} {Symmetrization in Analysis}}}\ (\bibinfo  {publisher} {Cambridge University Press},\ \bibinfo {year} {2019})\BibitemShut {NoStop}%
\bibitem [{\citenamefont {Day}(1972)}]{Day_1972}%
  \BibitemOpen
  \bibfield  {author} {\bibinfo {author} {\bibfnamefont {P.~W.}\ \bibnamefont {Day}},\ }\href {https://doi.org/10.4153/CJM-1972-093-x} {\bibfield  {journal} {\bibinfo  {journal} {Canadian Journal of Mathematics}\ }\textbf {\bibinfo {volume} {24}},\ \bibinfo {pages} {930} (\bibinfo {year} {1972})}\BibitemShut {NoStop}%
\bibitem [{\citenamefont {W\"ustner}(2003)}]{Wustner2003}%
  \BibitemOpen
  \bibfield  {author} {\bibinfo {author} {\bibfnamefont {M.}~\bibnamefont {W\"ustner}},\ }\href {http://eudml.org/doc/122929} {\bibfield  {journal} {\bibinfo  {journal} {Journal of Lie Theory}\ }\textbf {\bibinfo {volume} {13}},\ \bibinfo {pages} {307} (\bibinfo {year} {2003})}\BibitemShut {NoStop}%
\end{thebibliography}%
\end{document}